\def\trimafterfigure{\vspace{-0mm}}
\theoremstyle{plain}
\newtheorem{observation}{Observation}
\begin{document}

\title{The Complexity of Max-Min $k$-Partitioning}

\author{Anisse Ismaili\vspace*{5mm}}

\begin{abstract}  
In this paper we study a max-min $k$-partition problem on a weighted graph,
that could model a robust $k$-coalition formation.
We settle the computational complexity of this problem as complete for class $\Sigma_2^P$. This hardness holds even for $k=2$ and arbitrary weights, or $k=3$ and non-negative weights, which matches what was known on \textsc{MaxCut} and \textsc{Min-3-Cut} one level higher in the polynomial hierarchy. 
\end{abstract}

\keywords{$k$-Partition; Robustness; Complexity}

\maketitle

\vspace*{5mm}

\section{Preliminaries}
A max-min $k$-partition instance is defined by $\langle N, L, w, k, m, \theta \rangle$.
\begin{itemize}
      \item $(N, L, w)$ is a weighted undirected graph.
  $N=[n]$, where $n \in \mathbb{N}$ 
  is a set of nodes.\footnote{Given $n \in \mathbb{N}$, $[n]$ is shorthand of $\{1, \ldots, n\}$.}
    The set of links $L \subseteq {N \choose 2}$ consists of unordered node pairs.
    Link $\ell=\{i,j\}$ maps to weight $w_{ij} \in \mathbb{Z}$.
    {Equivalently, $w:N^2\rightarrow\mathbb{Z}$ satisfies for any $(i,j)\in N^2$ that $w(i,i)=0$, $w(i,j)=w(j,i)$ and $w(i,j)\neq 0\Rightarrow\{i,j\}\in L$.}
   \item $k$ is the size of a partition, $2 \le k < n$.
   \item $m \in \mathbb{N}$ is the number of nodes that could be removed.
   \item $\theta \in \mathbb{Z}$ is a threshold value. 
\end{itemize}
Let $\pi$ denote a $k$-partition of $N$, which is a collection of node-subsets
$\{S_1, \ldots, S_k\}$,
such that for each $i \in [k]$, $S_i \subseteq N$, and 
$\forall S_i, S_j\in\pi$, where $i\neq j$, $S_i\cap S_j=\emptyset$ holds. 
We say that a $k$-partition $\pi$ is complete when
$\bigcup_{i \in [k]}  S_i = N$ holds (otherwise, it is incomplete). 
For a complete partition $\pi$ and an incomplete partition $\pi'$,
we say that $\pi$ subsumes $\pi'$ when
$S_i \supseteq S'_i$ holds for all $i \in [k]$. 
For node $i\in N$, $\pi(i)$ is the node-subset to which it belongs.
For any $S \subseteq N$,
define  
$$W(S)=\sum\nolimits_{\{i,j\}\subseteq S}w(i,j).$$
Then, let $W(\pi)$ denote $\sum_{S \in \pi} W(S)$. We require that no
node-subset be empty; hence, if some node-subset is empty, we set $W(\pi)=-\infty$.

Given a $k$-partition $\pi=\{S_1, \ldots, S_k\}$ and a set $M \subseteq N$, 
the remaining incomplete partition $\pi_{-M}$ after removing $M$ is 
defined as $\{S'_1, \ldots, S'_k\}$, where $S'_i = S_i \setminus M$.
Let $W_{-m}(\pi)$ denote the minimum value after removing at most $m$ nodes,
i.e., it is defined as: 
$$W_{-m}(\pi)=\min\limits_{M\subseteq N,|M|\leq m}\{W(\pi_{-M})\}.$$
To obtain $W_{-m}(\pi)\neq-\infty$, 
every $S\in\pi$ needs to contain at least $m+1$ nodes,
so that no node-subset of $\pi_{-M}$ is emptied.
For partition $\pi=\{S_1, \ldots, S_k\}$, we define its deficit count
$\text{df}(\pi)$ as $\sum_{i\in[k]} \max(0, m+1 - |S_i|)$. 
Thus, $\text{df}(\pi)=0$ must hold in order to obtain
$W_{-m}(\pi)\neq-\infty$. 


\begin{definition}
The decision version (1) of our main problem is defined below. It may also be referred to as the defender's problem.
\begin{enumerate}
\item \textsc{Max-Min-$k$-Partition}:
Given a max-min $k$-partition instance, 
is there any $k$-partition $\pi$
satisfying $W_{-m}(\pi)\geq \theta$?
\item \textsc{Max-Min-$k$-Partition/Verif}:      
Given an instance of a max-min $k$-partition and
a partition $\pi$,
does $W_{-m}(\pi)\geq \theta$ hold?
\end{enumerate}
A key step is to study the natural verification problem (2), to which complement we refer as the attacker's problem. (Does an attack $M\subseteq N,|M|\leq m$ on $\pi$ exist such that $W(\pi_{-M})\leq\theta-1$?)
\end{definition}


\vspace*{5mm}

\section{Complexity of \textsc{Max-Min-$k$-Partition}}
\label{sec:complexity}

\vspace*{2mm}

In this section, we address the computational complexity of the defender's problem.
The verification (resp. attacker's) problem itself turns out to be coNP-complete (resp. NP-complete), 
which intricates one more level in the polynomial hierarchy (PH).  
We show that \textsc{Max-Min-$k$-Partition} is complete for class $\Sigma_2^P$, even in two cases:
\begin{enumerate}
\item[(a)] when $k=2$ for arbitrary link weights $w\lessgtr 0$, or
\item[(b)] when $k=3$ for non-negative link weights $w\geq 0$.
\end{enumerate}
These results seem to match what was known on 
\textsc{MaxCut} \cite{karp1972reducibility} (contained in \textsc{Min-2-Cut} when $w\lessgtr 0$ and NP-complete) and 
\textsc{Min-3-Cut} \cite{Dahlhaus:1992:CMC:129712.129736} (NP-complete for $w\geq 0$ when one node is fixed in each node-subset), 
but one level higher in PH.

\begin{observation}\label{rk:1}
\textsc{Max-Min-$k$-Partition/Verif} is coNP-complete.
It holds even for $k=1$, weights $w$ in $\{0,1\}$ and threshold $\theta=1$.
\end{observation}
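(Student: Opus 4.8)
The plan is to establish both membership and hardness by working with the attacker's problem, which is the complement of \textsc{Max-Min-$k$-Partition/Verif}. For membership in coNP I would exhibit the attacker's problem in NP: a certificate is an attack $M\subseteq N$, and one checks in polynomial time that $|M|\le m$ and that $W(\pi_{-M})\le\theta-1$ by summing the weights inside each remaining node-subset of $\pi_{-M}$. Because all weights are integers, $W_{-m}(\pi)=\min_{|M|\le m}W(\pi_{-M})\ge\theta$ fails exactly when such an $M$ exists; hence the verification problem is the complement of an NP problem and lies in coNP.

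For hardness I would reduce from \textsc{Vertex Cover}, so that NP-hardness of the attacker's problem yields coNP-hardness of the verification problem. Given a \textsc{Vertex Cover} instance $\langle G=(V,E),m\rangle$ with $m<|V|$ (the only nontrivial regime), I set $N=V$, $L=E$, all link weights equal to $1$, $k=1$, threshold $\theta=1$, and the single-block partition $\pi=\{N\}$. Then $W(\pi_{-M})$ is precisely the number of edges induced by $N\setminus M$, and since $\theta-1=0$ an attack $M$ succeeds exactly when $N\setminus M$ spans no edge, i.e.\ when $M$ is a vertex cover. Thus $G$ admits a vertex cover of size at most $m$ if and only if the attacker succeeds, i.e.\ if and only if $W_{-m}(\pi)<\theta$; equivalently, the verification answer is ``yes'' precisely on the no-instances of \textsc{Vertex Cover}.

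The one point requiring care is the $-\infty$ convention for emptied node-subsets. Taking $m<|V|=n$ loses no generality, as \textsc{Vertex Cover} is trivial once $m\ge|V|$, and it guarantees $\text{df}(\pi)=\max(0,m+1-n)=0$, so the instance is well-posed and every attack leaves $N\setminus M$ nonempty. Consequently $W(\pi_{-M})$ is a genuine nonnegative edge count rather than $-\infty$, which makes the equivalence ``$W(\pi_{-M})=0$ iff $M$ is a vertex cover'' exact in both directions (the backward direction uses that edge counts are never negative). I expect this bookkeeping to be the only real obstacle; the combinatorial core is the immediate identification of a zero-weight remnant with an independent set, so the reduction and its correctness proof are short.
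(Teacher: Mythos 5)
Your proof is correct and follows essentially the same route as the paper: coNP membership via the attack set $M$ as a polynomially verifiable no-certificate, and coNP-hardness by the same reduction from \textsc{VertexCover} to the attacker's problem with $N\equiv V$, weight $1$ exactly on edges, $\pi=\{N\}$, and $\theta=1$, where a successful attack is precisely a vertex cover of size at most $m$. Your extra care about the $-\infty$ convention (requiring $m<|V|$) is a legitimate detail the paper leaves implicit, but it does not change the argument.
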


\begin{proof}
Decision problem \textsc{Max-Min-$k$-Partition/Verif} is in class coNP, since for any no-instance, a failing set $M$ such that $W(\pi_{-M})\leq\theta-1$ is a no-certificate verifiable in polynomial-time.

We show coNP-hardness by reduction from \textsc{MinVertexCover} to the (complement) attacker's problem.
Let graph $G=(V,E)$ and vertex number $m\in\mathbb{N}$ be any instance of \textsc{MinVertexCover}.
\textsc{MinVertexCover} asks whether there exists a vertex-subset $U\subseteq V,|U|\leq m$ such that $\forall \{i,j\}\in E, i\in U\mbox{ or } j\in U$, i.e. every edge is covered by a vertex in $U$. We reduce it to an attacker's instance with nodes $N\equiv V$, weights $w(i,j)\in\{0,1\}$ equal to one if and only if $\{i,j\}\in E$ and threshold $\theta=1$. The verified partition is simply $\pi=\{N\}$. The idea is that constraint $W(\pi_{-M})\leq 0$ is equivalent to damaging every link, hence to finding a vertex-cover $U\equiv M$ with $|M|\leq m$.
%
%
%
\end{proof}

We now proceed with the computational complexity of the main defender's problem under $w\lessgtr 0$ and $w\geq 0$.
We show $\Pi_2^P$-hardness of the $\forall\exists$ complement by reduction from \textsc{MaxMinVertexCover} or \textsc{$\forall\exists$3SAT}.
The idea is to (1) enforce that only some \emph{proper} partitions are meaningful. One possible proper partition corresponds to one choice on $\forall$ in the original problem. Then, (2) within one particular node-subset of a proper partition, we represent the subproblem (e.g. \textsc{VertexCover} or $\textsc{3-SAT}\leq \textsc{IndependentSet}=\textsc{VertexCover}$).

\begin{theorem}\label{th:k2}
Problem \textsc{Max-Min-$k$-Partition} is $\Sigma_2^P$-complete, 
even for $k=2$ node-subsets and $w\in\{-n^2,1,2\}$.
\end{theorem}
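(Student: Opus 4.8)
The plan is to prove membership and hardness separately. For membership in $\Sigma_2^P$, observe that a $k$-partition $\pi$ is a polynomial-size existential certificate, and that checking $W_{-m}(\pi)\geq\theta$ is exactly the verification problem, which is in coNP by Observation~\ref{rk:1}. Hence \textsc{Max-Min-$k$-Partition} lies in $\exists\cdot\mathrm{coNP}=\Sigma_2^P$, and it only remains to establish $\Sigma_2^P$-hardness, equivalently $\Pi_2^P$-hardness of the complementary attacker-wins problem $\forall\pi\,\exists M:\,W(\pi_{-M})\leq\theta-1$.

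For hardness I would reduce from the canonical $\Pi_2^P$-complete problem $\forall X\,\exists Y\,\psi(X,Y)$ with $\psi$ a 3-CNF (the form $\forall\exists$3SAT, which is $\Pi_2^P$-complete because the inner $\exists Y\,\psi$ is an NP/SAT question guarded by a universal $\forall X$). The shape $\forall\pi\,\exists M$ of the complement mirrors $\forall X\,\exists Y$ exactly: the universally quantified partition $\pi$ will encode the assignment $\alpha$ to $X$, while the existentially quantified attack $M$ will encode the choice of $Y$, re-using the vertex-cover mechanism of Observation~\ref{rk:1} (an attack succeeds by behaving like a vertex cover).

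The three weight classes play separate roles. I would use $-n^2$-links purely to carve out the \emph{proper} partitions. Two anchor nodes joined by a $-n^2$-link fix a global $S_1/S_2$ reference (the unordered-pair symmetry is thereby factored out), and every node other than the variable-gadget literals is pinned to a prescribed side by a $-n^2$-link to the appropriate anchor. For each universal variable $x_i$ I attach a small connected bipartite \emph{variable gadget} of $-n^2$-links whose two proper $2$-colorings are the only remaining degrees of freedom; since the total positive weight is at most $2{n \choose 2}<n^2$, any partition activating even one $-n^2$-link (both endpoints in one part) already satisfies $W(\pi)<0\le\theta$ under the empty attack and is therefore defeated for free. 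Hence the only partitions the attacker must genuinely beat are the $2^p$ \emph{proper} ones, in bijection with the assignments $\alpha$ to $X$ (the side chosen for $x_i$'s literal encodes its truth value). The weights $1$ and $2$ then build, inside the pinned part, a vertex-cover-style encoding of $\psi$ (via $\text{3SAT}\le\text{VertexCover}$, in the style of Observation~\ref{rk:1}); crucially, $\alpha$ acts on this inner instance by toggling which positive links incident to the literal nodes are \emph{active} (same part, hence removable) versus \emph{cut} (split, hence absent). Finally I pad each part with enough neutral nodes that no attack of size $m$ can empty a part, so the $-\infty$/deficit case never interferes.

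Calibrating $m$ and $\theta$ so that, for each $\alpha$, the attacker can force $W(\pi_{-M})\le\theta-1$ with at most $m$ removals precisely when some $Y$ satisfies $\psi(\alpha,Y)$, the pieces combine as desired: the attacker beats every $\pi$ iff it beats every proper $\pi_\alpha$ (improper ones being free wins), iff for every $\alpha$ there is a satisfying $Y$, iff $\forall X\,\exists Y\,\psi$. This yields $\Pi_2^P$-hardness of the complement, hence $\Sigma_2^P$-completeness, entirely within $k=2$ and $w\in\{-n^2,1,2\}$. The step I expect to be the main obstacle is exactly this last calibration: making the correspondence between minimum-weight attacks and satisfying assignments \emph{tight in both directions}. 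I must rule out ``off-script'' attacks --- removing variable-gadget nodes, or nodes that destroy weight without corresponding to a consistent literal/clause choice --- and this is where the two distinct positive weights $1$ and $2$ earn their keep, letting a counting/budget argument force every $\theta$-beating attack to delete exactly the vertex-cover-corresponding nodes. Pinning down that tightness, together with the exact arithmetic fixing $m$, $\theta$ and the individual gadget weights, is the technical heart of the proof.
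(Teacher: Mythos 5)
Your membership argument and the overall complement shape ($\forall\pi\,\exists M$, heavy negative links killing improper partitions, attacks acting as vertex covers, weights $1$ and $2$ for a counting argument) match the paper's strategy in outline, but there is a genuine gap exactly at the step you flag as ``the technical heart'' and leave open --- and it is not a routine calibration, it is the proof. The paper closes it with a specific device absent from your sketch: \emph{every} pair of nodes carries positive weight --- $2$ on edges, $1$ on all remaining pairs --- so that for any proper $2$-partition with $|S_1|=|S_2|=n$ and attack budget $2m$, the surviving weight is lower-bounded by the pure pair count $f_{n,m}(x)=g_{n,m}+x(x-2m)$, where $x$ is the number of removals inside $S_1$. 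Strict convexity of $f_{n,m}$ with unique integer minimum at $x=m$ forces any attack reaching the threshold $\theta-1=f_{n,m}(m)$ to be exactly balanced ($m$ removals per side), and then equality holds if and only if all weight-$2$ pairs in \emph{both} parts are hit, i.e.\ if and only if $M_1$ and $M_2$ are vertex covers. This single mechanism rules out all your ``off-script'' attacks and delivers the tight yes$\Leftarrow$yes direction in one stroke. Your sketch has no analogue: with weight-$0$ (absent) background pairs and ``neutral'' padding nodes, the attacker can concentrate removals wherever weight density is highest, and nothing pins the attack to the encoding part, let alone to a cover; moreover zero-weight pairs and neutral padding would violate the claimed restriction $w\in\{-n^2,1,2\}$, which in the paper holds because literally every node pair receives one of these three weights.

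A second, related gap: for one fixed threshold $\theta$ to separate yes from no, the baseline surviving weight must be \emph{invariant} across all proper partitions. The paper gets this for free by reducing from $\Pi_2^P$-complete \textsc{MaxMinVertexCover} in the Ko--Lin normal form, where all blocks $V_{i,j}$ have equal size, so every proper partition has parts of size exactly $n$ and the same count $f_{n,m}(m)$ of surviving pairs; the universal quantifier ``one block per index'' is then in exact bijection with proper $2$-partitions, with the two parts realizing $p$ and $1-p$ simultaneously (both are covered, which is consistent because the $\forall$ ranges over all $p$). Your route from \textsc{$\forall\exists$3SAT} gives no such guarantee: anchor-pinned nodes plus bipartite variable gadgets yield part sizes that vary with the assignment $\alpha$ unless every gadget has equal-size color classes --- a constraint you never impose --- and once part sizes vary, the all-pairs baseline (which you would need anyway, per the first gap) varies with $\alpha$ and no single $\theta$ works. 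In effect your plan would have to re-prove a substantial portion of Ko and Lin's hardness of \textsc{MaxMinVertexCover} inside the partition gadgetry; the paper avoids this by reducing from that problem directly. (Minor slip: improper partitions must satisfy $W(\pi)\le\theta-1$, not $\le\theta$; this does hold since $\theta-1=f_{n,m}(m)\geq 0$, but your write-up states the wrong inequality.)
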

\begin{figure}[b]
\centering
\begin{tikzpicture}[scale=0.8]
\node[thick, rectangle,draw=black!50!green, inner sep=0.5em] (n10) at (0,0) {$N_{1,0}$};
\node[thick, rectangle,draw=black!50!green, inner sep=0.5em] (n20) at (2.4,0) {$N_{2,0}$};
\node[thick, rectangle,draw=black!50!blue, inner sep=0.5em] (n30) at (4.8,0) {$N_{3,0}$};
\node[thick, rectangle,draw=black!50!green, inner sep=0.5em] (n40) at (7.2,0) {$N_{|I|,0}$};
\node[thick, rectangle,draw=black!50!blue, inner sep=0.5em] (n11) at (0,-1.6) {$N_{1,1}$};
\node[thick, rectangle,draw=black!50!blue, inner sep=0.5em] (n21) at (2.4,-1.6) {$N_{2,1}$};
\node[thick, rectangle,draw=black!50!green, inner sep=0.5em] (n31) at (4.8,-1.6) {$N_{3,1}$};
\node[thick, rectangle,draw=black!50!blue, inner sep=0.5em] (n41) at (7.2,-1.6) {$N_{|I|,1}$};
\draw[]
	(n10) edge[dashed] node[left] {$-\Lambda$} (n11)
	(n20) edge[dashed] node[left] {$-\Lambda$} (n21)
	(n30) edge[dashed] node[right] {$-\Lambda$} (n31)
	(n40) edge[dashed] node[right=2mm] {$-\Lambda$} (n41);
\draw[]
	(n10) edge[black!30,thin, out=310, in=170]  (n21)
	(n10) edge[black!30,thin, out=20, in=105]  (n31)
	(n10) edge[black!30,thin] (n20)
	(n11) edge[black!30,thin] (n21)
	(n20) edge[black!30,thin] (n30)
	(n20) edge[black!30,thin] (n31)
	(n20) edge[black!30,thin, out=15, in=165] (n40)
	(n21) edge[black!30,thin] (n31)
	(n21) edge[black!30,thin, out=340, in=255] (n40)
	(n11) edge[black!30,thin, out=340, in=200] (n31)
	(n30) edge[black!30,thin] (n40)
	(n31) edge[black!30,thin] (n40)
	(n31) edge[black!30,thin] (n41);
\draw [thick, black!35!green] plot [smooth, tension=0.5] coordinates 
	{ (-0.7,0) (0,-0.5) (2.4,-0.5) (4.8,-2.2) (7.9,-0.3) (7.1,0.6) (4.8,-1.2) (2.6,0.5) (0,0.5) (-0.7,0)};
\draw [thick, blue] plot [smooth, tension=0.5] coordinates 
	{ (-0.7,-1.6) (0,-1.1) (2.4,-1.1) (4.8,0.6) (7.9,-1.3) (7.1,-2.2) (4.8,-0.4) (2.6,-2.1) (0,-2.1) (-0.7,-1.6)};
\end{tikzpicture}
\vspace*{-2mm}
\caption{Reduction from \textsc{MaxMinVertexCover} to \textsc{co-Max-Min-$k$-Partition}: $w_{ij}=2$ if and only if $\{ij\}$ is an edge. A proper 2-partition $\pi=\{S_1,S_2\}$ is in green ($S_1$) and blue ($S_2$).}\label{fig:th:2}
\end{figure}
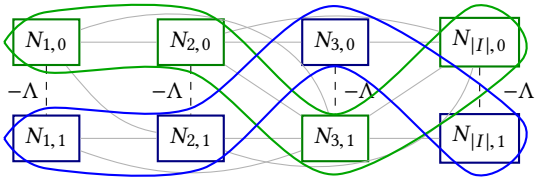
\begin{proof}
Decision problem \textsc{Max-Min-$k$-Partition} asks whether 
$\exists~k\mbox{-partition } \pi,
\forall~M\subseteq N, |M|\leq m,
W(\pi_{-M})\geq\theta$.
Therefore, it lies in class $\Sigma_2^P$, since, for yes-instances, such a $k$-partition $\pi$ is a certificate that can be verified by an NP-oracle on the remaining coNP problem \textsc{Max-Min-$k$-Partition/Verif}. We show $\Sigma_2^P$-hardness by a (complementary) reduction from $\Pi_2^P$-complete problem \textsc{MaxMinVertexCover}, defined as follows. Given graph $G=(V,E)$ whose vertices are partitioned by  index set $I$ into $V=\bigcup_{i\in I}(V_{i,0}\cup V_{i,1})$, for a function $p:I\rightarrow\{0,1\}$, we define $V^{(p)}=\bigcup_{i\in I}V_{i,p(i)}$ and induced subgraph $G^{(p)}=(V^{(p)},E^{(p)})$. Given $m\in\mathbb{N}$, it asks whether:
$$
\forall p\!:\!I\!\rightarrow\!\{0,1\},\quad
\exists U\subseteq V^{(p)}\!\!,|U|\!\leq\! m,\quad
U\mbox{ is a vertex cover of } G^{(p)}.
$$
where ``$U$ is a vertex cover of $G^{(p)}$'' means $\forall \{u,v\}\in E[V^{(p)}]$, $u\in U$ or $v\in U$.
Since edges between $V_{i,0}$ and $V_{i,1}$ are never relevant, we can remove them. By \cite[Th. 10, proof]{Ko1995}, all $V_{i,j}$ sets have the same size, hence set $V^{(p)}$ has a constant size $n$ for any $p$.

The reduction is described in Figure \ref{fig:th:2}.
We reduce any instance of \textsc{MaxMinVertexCover} (as described above) to the following \emph{complementary} instance of \textsc{Max-Min-$k$-Partition}.
Nodes $N\equiv V$ are identified with vertices, hence can also be partitioned by $I\times\{0,1\}$ into $N=\bigcup_{i\in I}(N_{i,0}\cup N_{i,1})$ with $N_{i,j}\equiv V_{i,j}$ . We ask for $k=2$ node-subsets and choose a large number $\Lambda$, e.g. $\Lambda=n^2$. For every link $\{i,j\}\in{N \choose 2}$, if $\{i,j\}\in E$, we define synergy $w(i,j)=2$; otherwise if $\{i,j\}\notin E$, we define $w(i,j)=1$. However, for every $\ell\in I$ and every $(i,j)\in N_{\ell,0}\times N_{\ell,1}$, we define negative weight $w(i,j)=-\Lambda$. Here, up to $2m$ nodes might fail, and threshold $\theta=f_{n,m}(m)+1$ is defined in the proof.
Since we are working on a complementary instance, the question is whether
$$
\forall 2\mbox{-partition }\pi,\quad
\exists M\subseteq N,|M|\leq 2m,\quad
W(\pi_{-M})\leq f_{n,m}(m),
$$
where $f_{n,m}:[0,2m]\rightarrow[0,n^2]$ is defined later.

This condition is trivially satisfied on 2-partitions $\pi$ where for some $\ell\in[I]$, two nodes $(i,j)\in N_{\ell,0}\times N_{\ell,1}$ are in the same node-subset. Indeed, even with an empty attack $M=\emptyset$, weight $W(\pi_{-\emptyset})$ incurs synergy $w(i,j)=-\Lambda$ and $W(\pi_{-\emptyset})<0\leq f_{n,m}(m)$.
Therefore, the interesting part of this condition is on the other 2-partitions: the \emph{proper} 2-partitions $\pi=\{S_1,S_2\}$, which satisfy $\forall\ell\in[I],\forall (i,j)\in N_{\ell,0}\times N_{\ell,1}, \pi(i)\neq\pi(j)$.
It's easy to see that $\pi$ can be characterized by a function $p\!:\!I\!\rightarrow\!\{0,1\}$ such that  $S_1=\bigcup_{i\in I}N_{i,p(i)}$ and $S_2=\bigcup_{i\in I}N_{i,1-p(i)}$, and $|S_1|=|S_2|=n$.
Since the remaining weights inside $S_1$ and $S_2$ are positive, the largest failures are the most damaging, $|M|=2m$ holds.

We now define function $f_{n,m}$. It maps $x\in[0,2m]$ to the number of in-subset pairs in a proper 2-partition $\pi=\{S_1,S_2\}$ ($|S_1|\!=\!|S_2|\!=\!n$) after $x$ nodes fail in $S_1$ and $2m-x$ in $S_2$ (total $2m$ failures). One has:
$$
f_{n,m}(x)
\enskip=\enskip 2{n\choose 2}-\sum\limits_{i=1}^{x}(n\!-\!i)-\sum\limits_{j=1}^{2m-x}(n\!-\!j)
\enskip=\enskip g_{n,m}+ x(x-2m),
$$
where $g_{n,m}$ is constant w.r.t. $x$.
Since $f'_{n,m}(x)=2(x-m)$ and $f''_{n,m}(x)=2$,
it is a strictly convex function with minimum point at $x=m$.
Therefore, for integers $x\in[2m]$, if $x\neq m$, the inequality $f_{n,m}(x) > f_{n,m}(m)$ holds.
By definition, $f_{n,m}(x)$ is a lower bound on $W(\pi_{-M})$ (by assuming that all remaining weights in $\pi_{-M}$ have a value of $1$, instead of $1$ or $2$). Therefore, the main condition can only be satisfied by \emph{balanced} failures $M=M_1\cup M_2$ such that $M_1\subseteq S_1$, $M_2\subseteq S_2$ and crucially: $|M_1|=|M_2|=m$. 

(yes$\Rightarrow$yes) Any subgraph $G^{(p)}$ admits a vertex cover $U\subseteq V^{(p)}$ with size $|U|\leq m$. Let us show that any proper 2-partition $\pi=\{S_1,S_2\}$ (characterized by a function $p:I\rightarrow\{0,1\}$) can be failed down to $f_{n,m}(m)$. Let $M_1\subseteq S_1$ correspond to the vertex cover of subgraph $G^{(p)}$ and $M_2\subseteq S_2$ to the vertex cover of subgraph $G^{(1-p)}$. Then, the failing set $M=M_1\cup M_2$ has a size of $|M|\leq 2m$, is balanced, and any node pair $\{i,j\}$ of weight two in $\pi$ (edge in $E$) has $i$ or $j$ in $M$, by the vertex covers. All in all, $W(\pi_{-M})=f_{n,m}(m)$.

(yes$\Leftarrow$yes) Any proper 2-partition $\pi=\{S_1,S_2\}$ (characterized by function $p:I\rightarrow\{0,1\}$) admits a well balanced failing set $M=M_1\cup M_2$ such that $W(\pi_{-M})\!\leq\!f_{n,m}(m)$. Then it must be the case that $M_1$ (and $M_2$) covers all the node pairs of synergy two in $S_1$ (resp. $S_2$) that correspond to the edges of $G^{(p)}$ (resp. $G^{(1-p)}$). 
Then, for any subgraph $G^{(p)}$, attack $U\equiv M_1$ is a vertex cover.
\end{proof}

Adding a constant to all weights  does not preserve optimal solutions.
Thus, we cannot modify a problem with negative weights to an
equivalent non-negative weight problem.
Still, a hardness result for $k=3$ can also be obtained 
from \textsc{$\forall\exists$3SAT}. 


\begin{theorem}\label{th:k3}
\textsc{Max-Min-$k$-Partition} is $\Sigma_2^P$-complete, 
even for $k=3$ node-subsets and weights $w\in\{0,\Lambda,\Lambda+1\}$, where $\Lambda\geq n^2$.
\end{theorem}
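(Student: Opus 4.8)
\noindent
The plan is to reuse the architecture of Theorem~\ref{th:k2}. Membership in $\Sigma_2^P$ is identical: a $3$-partition $\pi$ is a certificate verifiable by an NP-oracle on the coNP problem \textsc{Max-Min-$k$-Partition/Verif} (Observation~\ref{rk:1}), placing the problem in $\Sigma_2^P$. For hardness I would again argue \emph{complementarily}, reducing the $\Pi_2^P$-complete problem \textsc{$\forall\exists$3SAT} --- decide $\forall X\,\exists Y\,\phi(X,Y)$ for a $3$-CNF matrix $\phi$ with clauses $C_1,\dots,C_r$ --- to \textsc{co-Max-Min-$3$-Partition}, so that the formula is true exactly when every $3$-partition admits an attack $M$, $|M|\le m$, with $W(\pi_{-M})\le\theta-1$.

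For the inner ($\exists Y$) part I would follow the hint $\textsc{3SAT}\le\textsc{IndependentSet}=\textsc{VertexCover}$. Inside a designated ``active'' node-subset I place the standard clause gadget (a triangle per clause, with conflict links joining complementary literals); its gadget links receive weight $\Lambda+1$ while all other in-subset pairs receive the uniform baseline weight $\Lambda$, exactly mirroring the baseline-$1$/edge-$2$ weighting of Theorem~\ref{th:k2} but lifted so that no weight is negative. A balanced attack of the budgeted size then reaches the threshold precisely when the removed nodes cover every surviving weight-$(\Lambda+1)$ link, i.e.\ when the complementary retained independent set is a satisfying assignment $Y$. The retained uniformity of the baseline is what lets me reuse the convexity argument: a counting function analogous to $f_{n,m}$, convex in how the budget is split among the subsets, forces attacks to be balanced and lets me set $\theta$ so that covering all gadget links is both necessary and sufficient to drop below $\theta$.

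The universal ($\forall X$) part must be encoded in the residual freedom of a \emph{proper} $3$-partition, one proper partition per assignment $X$, with each universal variable $x_i$ contributing a node (or small block) whose placement selects the literal of $x_i$ made available to the active subset, thereby pre-satisfying the clauses it should. The two implications then run as in Theorem~\ref{th:k2}: for (yes$\Rightarrow$yes), truth of $\forall X\exists Y\phi$ supplies, for every proper partition, a satisfying $Y$ whose induced vertex cover is a legal threshold-reaching attack, while improper partitions are disposed of separately; for (yes$\Leftarrow$yes), any threshold-reaching attack on the proper partition encoding $X$ must cover all gadget links, so its complement is a satisfying $Y$.

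The crux --- and the reason the statement needs $k=3$ rather than $k=2$ --- is disposing of improper partitions under the non-negativity constraint. In Theorem~\ref{th:k2} an improper partition paid a weight of $-\Lambda$ on some pair of $N_{\ell,0}\times N_{\ell,1}$, so $W(\pi_{-\emptyset})<0\le\theta$ made the attacker win for free; and, as the paragraph preceding this theorem records, one cannot recover this by merely shifting all weights positive, since a constant shift changes $W(\pi_{-M})$ by an $M$-dependent amount and so does not preserve the min. With weights confined to $\{0,\Lambda,\Lambda+1\}$ the separating link of value $-\Lambda$ is unavailable, and the whole purpose of the third node-subset is to supply an alternative, purely non-negative mechanism that makes every improper partition have small $W_{-m}$ (hence be trivially attackable) without introducing high-synergy ``anchor'' nodes that would break the uniformity the convexity step relies on. Designing that gadget so that proper $3$-partitions still correspond bijectively to universal assignments, and only to those, is the main obstacle; once it is in place, the threshold and budget bookkeeping reduces to the same routine convexity calculation as in Theorem~\ref{th:k2}.
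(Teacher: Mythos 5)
Your proposal gets the outer architecture right --- membership via a partition certificate plus the coNP verification oracle, complementary reduction from \textsc{$\forall\exists$3SAT}, the inner $\exists Y$ part encoded through $\textsc{3SAT}\leq\textsc{IndependentSet}=\textsc{VertexCover}$ on weight-$(\Lambda+1)$ links over a weight-$\Lambda$ baseline, and proper partitions encoding the universal assignment. You also correctly diagnose that the whole difficulty is eliminating improper partitions without negative weights. But you then explicitly defer that construction (``designing that gadget \ldots is the main obstacle; once it is in place \ldots''), and that deferred gadget \emph{is} the content of the theorem; a proof attempt that leaves it open has a genuine gap, not a routine bookkeeping step.

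Worse, the two structural commitments you do make point away from what actually works. First, you plan to reuse the balanced-attack convexity argument via an analogue of $f_{n,m}$; the paper's construction instead makes the three subsets deliberately \emph{unequal}, of sizes $3m$, $m+1$, $m+1$ (the big subset $S^{(p)}$ contains a $\Lambda$-clique $K$ of $m$ anchor nodes plus one tetrad per clause; the two small subsets are pinned by anchor nodes $v^{1/2},v^{2/2}$), and exploits the quadratic growth of ${\nu\choose 2}$ to show the attacker provably \emph{concentrates} the entire budget $m$ on $S^{(p)}$; the threshold ${2m\choose 2}\Lambda+2{m+1\choose 2}\Lambda+\mu_1+\mu_2$ is calibrated against this concentrated attack, with compensating 1-links $\{v^{1/2},v^z_{i,1}\}$ keeping the counts $\mu_1,\mu_2$ in the small subsets constant across all proper partitions. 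So your stated aversion to ``high-synergy anchor nodes'' rules out precisely the device the proof needs. Second, the replacement for the $-\Lambda$ separator is not any penalty at all but the pattern $(\star)$ of zero-weight (absent) $\Lambda$-links between tetrads encoding inconsistent scenarios on the same $X$-variable: an improper partition is not punished up front --- it simply contains fewer $\Lambda$-links, so after the attack strictly fewer than ${2m\choose 2}+2{m+1\choose 2}$ of them survive and the value falls below $\theta$. Third, the universal part is not encoded by per-variable selector nodes as you suggest, but by duplicating each clause gadget into two four-node tetrads $N_{i,0},N_{i,1}$ (after normalizing to at most one $X$-literal per clause), where the extra node $v^z_{i,j}$ and the additional 1-link $\{v^x_{i,0},v^z_{i,0}\}$, present only in negative tetrads, force $v^x_{i,0}$ into any feasible cover --- this is how ``the $X$-literal is false, so a $Y$-literal must carry the clause'' is expressed --- while $(\star)$ enforces consistency of the induced assignment $p$ across clauses sharing an $X$-variable. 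Your blueprint (uniform baseline, balanced attacks, no anchors, selector nodes) supplies no mechanism for any of these three requirements, so the implications (yes$\Rightarrow$yes) and (yes$\Leftarrow$yes) could not be carried out as sketched.
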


\begin{proof}
Let us first recall a classical reduction from  \textsc{3SAT} to \textsc{IndependentSet}, and how the later relates to \textsc{VertexCover}.
Let any 3SAT instance be defined by formula $F=C_1\wedge\ldots\wedge C_{\alpha}$, where $C_i$ is a 3-clause on variables $X$. Every clause $C_i=\ell_{i,1}\vee\ell_{i,2}\vee\ell_{i,3}$ is reduced to triangle of vertices $V_i=\{v_{i,1},v_{i,2},v_{i,3}\}$ representing the literals of the clause. The set of $3\alpha$ vertices is then $V=\cup_{i=1}^{\alpha} V_i$. Between any two subsets $V_i,V_j$, edges exist between two vertices if and only if the corresponding literals are on the same variable and are complementary (hence incompatible). It is easy to see that an independent-set $U\subseteq V$ of size $\alpha$ must have exactly one vertex per triangle $V_i$, and will exist (no edges within) if and only if there exists an instantiation of $X$ that makes at least one literal per clause $C_i$ true.
Given a graph $G=(V,E)$, if $U\subseteq V$ is an independent-set, 
it means that 
$i\in U\wedge j\in U\Rightarrow \{i,j\}\notin E$.
Hence, contraposition 
$\{i,j\}\in E\Rightarrow (i\!\in\!V\!\setminus\!U)\vee (j\!\in\!V\!\setminus\!U)$
means that  $V\!\setminus\!U$ is a vertex cover.
For instance, in the reduction from \textsc{3SAT}, one can equivalently ask for a vertex cover $V\setminus U$ with size $2\alpha$; that is, two vertices per triangle $V_i$:
Set $V$ of third vertices shall have no edge left to cover.

Let any instance of \textsc{$\forall\exists$3SAT} be defined by 3CNF formula $F(X,Y)=\bigwedge_{i=1}^{\alpha}C_i$ on variables $X=\{x_1,\ldots,x_{|X|}\}$ and $Y=\{y_1,\ldots,y_{|Y|}\}$. This problem asks whether:
$$
\forall \tau_x:X\rightarrow\{0,1\},\quad
\exists \tau_y:Y\rightarrow\{0,1\},\quad
F(\tau_x,\tau_y)\text{ is true}.
$$
Without loss of generality, one can assume there is at most one $X$-literal per clause $C$.
Indeed, if there are three $X$-literals, some $\tau_x$ can make the clause false, and it is trivially a no-instance.
If there are two $X$-literals: $C=x\vee x'\vee y$, then by adding a fresh $Y$-variable $z$, one easily obtains $C=(x\vee z\vee y)\wedge(x'\vee \neg z\vee y)$.
For ease of presentation, we assume exactly one $X$-literal and two $Y$-literals. We extend this proof to including clauses with no $X$-literal, in its final remark. Let $X(C)$ be the $X$-\emph{literal} in clause $C$. 

We build a \textsc{Max-Min-3-Partition} instance on $n\!=\!10\alpha + 2$ nodes with  $m=2\alpha$ failures . We first describe the nodes.
To every clause $C_i=\ell^{x}_{i}\vee\ell^{y}_{i}\vee\ell^{y'}_{i}$, we associate two node tetrads $N_{i,0}=\{v^{x}_{i,0},v^{y}_{i,0},v^{y'}_{i,0},v^{z}_{i,0}\}$ and $N_{i,1}=\{v^{x}_{i,1},v^{y}_{i,1},v^{y'}_{i,1},v^{z}_{i,1}\}$ (both depicted in Figure \ref{fig:tetrads}) which represent the two scenarios on $X$-literal $\ell^{x}_{i}$: false or true. Hence, there are $2\alpha$ node tetrads and a total of $4m=8\alpha$ nodes in $T=\cup_{i=1}^{\alpha}\cup_{j\in\{0,1\}}N_{i,j}$. There is also a set $K$ of $m=2\alpha$ nodes, and two nodes $v^{1/2},v^{2/2}$. This construct is depicted in Figure \ref{fig:many-tetrads}.

\begin{figure}
\begin{tikzpicture}[square/.style={regular polygon,regular polygon sides=4}]
\node[circle, thin, draw=black, inner sep=0mm] (vzi0) at (0,1) {$v^z_{i,0}$};
\node[circle, thin, draw=black, inner sep=0mm] (vxi0) at (0,2) {$v^x_{i,0}$};
\node[circle, thin, draw=black, inner sep=0mm] (vy1i0) at (-1,0.5) {$v^{y}_{i,0}$};
\node[circle, thin, draw=black, inner sep=0mm] (vy2i0) at (1,0.5) {$v^{y'}_{i,0}$};
\node[circle, thin, draw=black, inner sep=0mm] (vzi1) at (5,1.5) {$v^z_{i,1}$};
\node[circle, thin, draw=black, inner sep=0mm] (vxi1) at (5,0.5) {$v^x_{i,1}$};
\node[circle, thin, draw=black, inner sep=0mm] (vy1i1) at (4,2) {$v^{y}_{i,1}$};
\node[circle, thin, draw=black, inner sep=-0.2mm] (vy2i1) at (6,2) {$v^{y'}_{i,1}$};
\draw[thick] (vxi0)--(vy1i0)--(vy2i0)--(vxi0);
\draw[ultra thick] (vzi0)--(vxi0);
\draw[thick] (vxi1)--(vy1i1)--(vy2i1)--(vxi1);
\node[text width=22mm] at (2.5,1.5) {No outgoing 1-link from $v^x_{i,j}$};
\node[square, draw=blue] at (5,1.5) {\hspace*{3mm}};
\node[square, draw=blue] at (5,0.5) {\hspace*{3mm}};
\node[square, draw=red] at (4,2) {\hspace*{3mm}};
\node[square, draw=red] at (6,2) {\hspace*{3mm}};
\node[square, draw=red] at (0,2) {\hspace*{3mm}};
\node[square, draw=blue] at (0,1) {\hspace*{3mm}};
\node[square, draw=red] at (-1,0.5) {\hspace*{3mm}};
\node[square, draw=blue,dashed] at (-1,0.5) {\hspace*{4.5mm}};
\node[square, draw=blue] at (1,0.5) {\hspace*{3mm}};
\node[square, draw=red,dashed] at (1,0.5) {\hspace*{4.5mm}};
\end{tikzpicture}
\vspace*{-3mm}
\caption{For clause $C_i=\ell^x_i\vee\ell^{y}_i\vee\ell^{y'}_i$, tetrads $N_{i,0}$ and $N_{i,1}$:\\Vertex-covers (red) and Independent-sets (blue) of size 2. Node $v^x_{i,0}$ (resp. $v^x_{i,1}$) is in no (resp. every) independent-set.}\label{fig:tetrads}
\trimafterfigure
\end{figure}
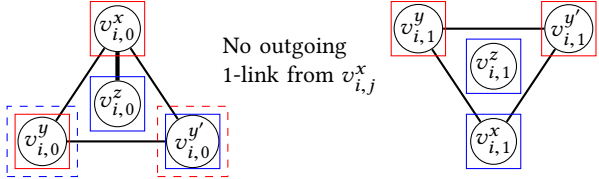

To describe the weights, we define a number $\Lambda\gg 1$, 
and only three different link weights $0,\Lambda,\Lambda+1$. 
We call $\Lambda$-link any link with weight $\Lambda$ or $\Lambda+1$.
We call 1-link any link with weight $\Lambda+1$.
Every pair of nodes in $\cup_{i=1}^{\alpha}\cup_{j\in\{0,1\}}N_{i,j}$ 
are linked by weight $\Lambda$ or $\Lambda+1$, 
except ($\star$) we set weights \emph{zero} (and no link) for every $i,i'\in[\alpha]$:
\begin{itemize}
\item[when] $X(C_i)=X(C_{i'})$ between $N_{i,j}$ and $N_{i',1-j}$ for $j\in\{0,1\}$, or 
\item[when] $X(C_i)= \neg X(C_{i'})$ between $N_{i,j}$ and $N_{i',j}$ for $j\in\{0,1\}$.
\end{itemize} 
The rationale is to forbid two inconsistent scenarios on a same $X$-variable to coexist in one node-subset.

Whether the $\Lambda$-link is also a 1-link is determined as follows.
Inside every node tetrad $N_{i,j}=\{v^{x}_{i,j},v^{y}_{i,j},v^{y'}_{i,j},v^{z}_{i,j}\}$,
there is a triangle of 1-links: 
$\{v^{x}_{i,j},v^{y}_{i,j}\}$,
$\{v^{y}_{i,j},v^{y'}_{i,j}\}$ and
$\{v^{y'}_{i,j},v^{x}_{i,j}\}$.
Only in negative tetrads $N_{i,0}$, there is a 1-link $\{v^{x}_{i,0},v^{z}_{i,0}\}$.
Given any tetrad $N_{i,j}$, node $v^{x}_{i,j}$ is not involved in any outgoing 1-link,
but only links with weight $\Lambda$.
Between any tetrads $N_{i,j}$ and $N_{i',j'}$ except ($\star$),
there is a 1-link between complementary nodes of $Y$-literals; that is, a 1-link exists when the later's literal is the negation of the former's.\footnote{It is the same idea as in the standard reduction from 3SAT to \textsc{IndependentSet}.}
Assuming w.l.o.g. that $\alpha$ is even,
let $\mu_1$ be the number of 1-links in $\bigcup_{i=1}^{i=\alpha/2} N_{i,0}$
and $\mu_2$ in $\bigcup_{i=\alpha/2+1}^{i=\alpha} N_{i,0}$.

Inside $K$, every pair of nodes is linked by weight $\Lambda$.
Also, every node in $K$ is linked to every node in tetrads $T$ by weight $\Lambda$.
Node $v^{1/2}$ is linked to every node in $\bigcup_{i=1}^{i={\alpha}/{2}}\bigcup_{j\in\{0,1\}}N_{i,j}$
by weight $\Lambda$, except for nodes $v^z_{i,1}$ by weight $\Lambda+1$; the same holds from node $v^{2/2}$
to every node in $\bigcup_{i={\alpha}/{2}+1}^{i=\alpha}\bigcup_{j\in\{0,1\}}N_{i,j}$.
All other weights are zeros. We achieve this construct by defining threshold $\theta$ as: 
$$
\theta-1 \quad=\quad {2m\choose 2}\Lambda + 2{m+1\choose 2}\Lambda + \mu_1 + \mu_2,
$$
and asking whether
$
\forall 3\text{-part }\pi,
\exists M\!\subseteq\!N,|M|\!\leq\!m,
W(\pi_{-M})\leq \theta\!-\!1.
$

A \emph{proper}-3-partition $\pi=\{S^{(p)},S^{1/2},S^{2/2}\}$ is characterized by an instantiation $p:X\rightarrow\{0,1\}$ of $X$ variables extended to literals by $p(\neg x)=1-p(x)$, and which defines:
$$
\begin{array}{cccclr}
S^{(p)} &=& K &\cup& \bigcup_{i=1}^{i=\alpha} N_{i,p(X(C_i))} & (3m\text{ nodes})\\
S^{1/2} &=& \{v^{1/2}\} &\cup& \bigcup_{i=1}^{i={\alpha}/{2}} N_{i,1-p(X(C_i))} & (m+1\text{ nodes})\\
S^{2/2} &=& \{v^{2/2}\} &\cup& \bigcup_{i={\alpha}/{2}+1}^{i=\alpha} N_{i,1-p(X(C_i))} & (m+1\text{ nodes})
\end{array}
$$
Note that in $S^{1/2}$ (resp. $S^{2/2}$) the number of 1-links is constant $\mu_1$ (resp. $\mu_2$) for any $p$, since the formula on $Y$-literals is the same and 1-link $\{v^{1/2},v^z_{i,1}\}$ (resp. $\{v^{2/2},v^z_{i,1}\}$) compensates for $\{v^x_{i,0},v^z_{i,0}\}$.

\begin{figure}[t]
\begin{tikzpicture}[square8/.style={regular polygon,regular polygon sides=8}]
\node[square8,draw=black!25!red,fill=white!80!red, inner sep=0.5em](K) at (3.5,-1.0) {$K_{16}$};
\node[circle,draw=black,fill=white!80!green, inner sep=0.0em](v1) at (1.5,-1.0) {$v^{\frac12}$};
\node[circle,draw=black,fill=white!80!blue, inner sep=0em](v2) at (5.5,-1.0) {$v^{\frac22}$};
\node[] at (-0.7, 0.0) {$N_{i,0}$:};
\node[] at (-0.7, -2.25) {$N_{i,1}$:};
\node[] at (-0.6, 0.65) {$X(C_{i})$:};
\node[] at (0, 0.65) {$\emptyset$};
\node[] at (1, 0.65) {$\emptyset$};
\node[] at (2, 0.65) {$x_1$};
\node[] at (3, 0.65) {$\neg x_1$};
\node[] at (4, 0.65) {$x_1$};
\node[] at (5, 0.65) {$\neg x_2$};
\node[] at (6, 0.65) {$x_3$};
\node[] at (7, 0.65) {$\neg x_3$};
\foreach \x in {0,...,7}
\ifthenelse{\x > 1.5}{
	\pgfmathtruncatemacro{\xx}{\x+1};
	\node[circle, thin, draw=black, inner sep=0mm] (vzi0) at (\x + 0,0) {\hspace*{1mm}};
	\node[circle, thin, draw=black, inner sep=0mm] (vxi0) at (\x + 0,0.3) {\hspace*{1mm}};
	\node[circle, thin, draw=black, inner sep=0mm] (vy1i0) at (\x  -0.3,-0.15) {\hspace*{1mm}};
	\node[circle, thin, draw=black, inner sep=0mm] (vy2i0) at (\x + 0.3,-0.15) {\hspace*{1mm}};
	\node[circle, thin, draw=black, inner sep=0mm] (vzi1) at (\x + 0,-2) {\hspace*{1mm}};
	\node[circle, thin, draw=black, inner sep=0mm] (vxi1) at (\x + 0,-2.3) {\hspace*{1mm}};
	\node[circle, thin, draw=black, inner sep=0mm] (vy1i1) at (\x -0.3,-1.85) {\hspace*{1mm}};
	\node[circle, thin, draw=black, inner sep=-0.2mm] (vy2i1) at (\x + 0.3,-1.85) {\hspace*{1mm}};
	\draw[] (vxi0)--(vy1i0)--(vy2i0)--(vxi0);
	\draw[thick] (vzi0)--(vxi0);
	\draw[] (vxi1)--(vy1i1)--(vy2i1)--(vxi1);
	\ifthenelse{\x < 4.5}{\draw[] (vzi1)--(v1);}{\draw[] (vzi1)--(v2);}
}{
	\node[circle, thin, draw=black, inner sep=0mm] (vz) at (\x + 0, 0) {\hspace*{1mm}};
	\node[circle, thin, draw=black, inner sep=0mm] (vy1) at (\x + -0.15, 0.25) {\hspace*{1mm}};
	\node[circle, thin, draw=black, inner sep=0mm] (vy2) at (\x  -0.15, -0.25) {\hspace*{1mm}};
	\node[circle, thin, draw=black, inner sep=0mm] (vy3) at (\x + 0.3, 0) {\hspace*{1mm}};
	\draw[] (vy1)--(vy2)--(vy3)--(vy1);
	\node[circle, thin, draw=black, inner sep=0mm] (vz) at (\x + 0, -2) {\hspace*{1mm}};
	\node[circle, thin, draw=black, inner sep=0mm] (vy1) at (\x + -0.15, -1.75) {\hspace*{1mm}};
	\node[circle, thin, draw=black, inner sep=0mm] (vy2) at (\x  -0.15, -2.25) {\hspace*{1mm}};
	\node[circle, thin, draw=black, inner sep=0mm] (vy3) at (\x + 0.3, -2) {\hspace*{1mm}};
	\draw[] (vy1)--(vy2)--(vy3)--(vy1);
	\draw[] (vz)--(v1);
}
\draw [dashed, black!25!green] plot [smooth, tension=0.5] coordinates 
	{(-0.4,-1.8) (0,-2.4) (2,-2.4) (2.4,-1.8) (2.0,-0.9) (3.4,-0.2) (3.0,0.5) (2.5,-0.3) (1.5,-0.6) (-0.4,-1.8)};
\draw [dashed, black!25!blue] plot [smooth, tension=0.5] coordinates 
	{(5,0.4) (6,0.4) (6.4,-0.2) (6.1,-0.9) (7.4,-1.8) (7,-2.4) (6.6,-1.8) (5.5,-1.3) (4.4,-1.8) (4,-2.4) (3.6,-1.8) (4.9,-0.9) (4.6,-0.2) (5,0.4) };
\draw [thick, black!25!red] plot [smooth, tension=0.5] coordinates 
	{ (-0.4,-0.2) (0,-0.4) (2.0,-0.5)  (2.7,-2.0) (3.0,-2.5) (3.5,-1.8) (4.0,-1.6) (4.5,-1.8) (5.0,-2.4) (6.1,-2.4) (7.4,-0.2) (7.0,0.4) (6.5,-0.2) (6.0,-1.5)(5.0,-1.5)  (4.05,0.4)  (3.5,-0.3) (2.7,-0.3) (2.1,0.4) (0,0.4) (-0.4,-0.2)};
\end{tikzpicture}
\vspace*{-5mm}
\caption{From \textsc{$\forall\exists$3SAT} to \textsc{Max-Min-3-Partition}: In this proper-3-partition, the attack needs to be a 1-link vertex-cover (giving an independent-set) of node-subset $S^{(p)}$ (red), where $p(x_1)=p(x_2)=0$ and $p(x_3)=1$.}\label{fig:many-tetrads}
\trimafterfigure
\end{figure}
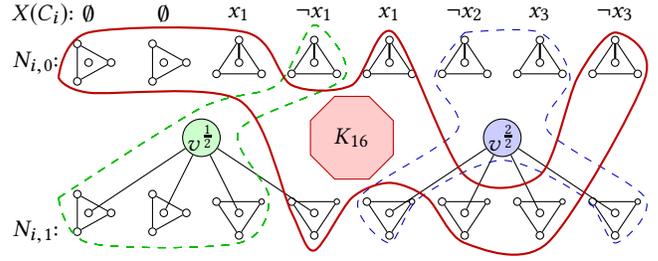

We show that in our construct, any 3-partition which is not a \emph{proper}-3-partition
does trivially satisfy the complement question above. 
First, let us reason as if all three node-subsets were cliques of $\Lambda$-links.
Crucially, in a node-subset of size $\nu$, the number of links ${\nu\choose 2}$ is quadratic. 
Therefore, the largest node-subsets will be the first attacked, and the only way $\pi_{-M}$ contains as many as ${2m\choose 2}+2{m+1\choose 2}$ $\Lambda$-links is if the node-subsets of $\pi$ had sizes $3m$, $m+1$ and $m+1$.
Second, assume $\Lambda$-links are missing in some node-subsets. 
Then, an attack would focus on more connected subsets and $\pi_{-M}$ cannot contain as many as ${2m\choose 2}+2{m+1\choose 2}$ $\Lambda$-links. 
Therefore, 3-partition $\pi$ must consist in $\Lambda$-link cliques of size $3m$, $m+1$ and $m+1$.
If the largest did not follow consistently some instantiation $p:X\rightarrow\{0,1\}$, 
then some $\Lambda$-links would be missing (see ($\star$)). Also, the only way to obtain two $\Lambda$-linked cliques of size $m+1$ on $N\setminus S^{(p)}$ is by $S^{1/2}$ and $S^{2/2}$.
We also know that $S^{1/2}$ and $S^{2/2}$ contain $\mu_1+\mu_2$ 1-links.

Crucially, attack $M$ always occurs where it does the largest damage w.r.t. $\Lambda$-links: on node-subset $S^{(p)}$, and the number of remaining $\Lambda$-links is ${2m\choose 2} + 2{m+1\choose 2}$. 
Given a proper-3-partition, what could make the inequality false would be a surviving 1-link in $S^{(p)}\setminus M$. Consequently, condition $\exists M,W(\pi_{-M})\leq\theta-1$ amounts to a $2\alpha$ node attack $M$  that covers every 1-link in $\bigcup_{i=1}^{i=\alpha} N_{i,p(X(C_i))}$. 
A crucial observation is that we necessarily attack/cover exactly two nodes per tetrad $N_{i,j}$, since each tetrad contains a triangle. 
In negative tetrads $N_{i,0}$, because of 1-link $\{v^{x}_{i,0},v^{z}_{i,0}\}$, one of these nodes has to be $v^{x}_{i,0}\in M$.  
In positive tetrads $N_{i,1}$, since node $v^{x}_{i,1}$ is not involved in other 1-links than the triangle, 
choosing both $v^{y}_{i,1}$ and $v^{y'}_{i,1}$ in 1-link cover $M$ is the best choice.
As in $\textsc{3SAT}\leq\textsc{IndependentSet}$, this amounts to a 1-link-independent-set $\overline{M}= S^{(p)}\setminus (K\cup M)$ with size $2\alpha$ and two nodes per tetrad $N_{i,j}$: first, node $v^{z}_{i,j}$, second if $j=0$ then $v^{y}_{i,0}$ xor $v^{y'}_{i,0}$, otherwise if $j=1$ then $v^{x}_{i,1}$.

(yes$\Rightarrow$yes) Assume that for every $\tau_x:X\rightarrow\{0,1\}$, there exists $\tau_y:Y\rightarrow\{0,1\}$ such that in every clause $C_i$ with $\tau_x(X(C_i))=0$, a $Y$-literal is made true by instantiation $\tau_y$. We show that given any proper-3-partition $\{S^{(p)},S^{1/2},S^{2/2}\}$, in  $S^{(p)}\setminus K=\bigcup_{i=1}^{i=\alpha} N_{i,p(X(C_i))}$, there exists a 1-link-independent-set $\overline{M}$ of size $2\alpha$, as below. 
Taking $\tau_x\equiv p$, let $\tau_y:Y\rightarrow\{0,1\}$ be as above mentioned.
Then, 
\begin{eqnarray*}
\overline{M} &=& \bigcup_{i\in[\alpha]}\left\{
\begin{array}{ll}
\text{if } p(X(C_i))\!=\!0 \text{:} & \{v^z_{i,0}, \text{one } v^y_{i,0}\mid\tau_y(\ell_i^y)=1\}\\
\text{if } p(X(C_i))\!=\!1 \text{:} & \{v^z_{i,1}, v^x_{i,1}\}
\end{array}\right.
\end{eqnarray*}
is a 1-link-independent-set of size $2\alpha$: node $v^y_{i,0}$ exists since instantiation $\tau_y$ gives at least one true literal per clause where $\tau_x(X(C_i))=0$, and nodes are not 1-linked (no literal contradiction).

(yes$\Leftarrow$yes) Assume that for any $\tau_x\equiv p:X\rightarrow\{0,1\}$, a 1-link-independent-set $\overline{M}$ with size $2\alpha$ exists in node-subset $S^{(p)}\setminus K=\bigcup_{i=1}^{i=\alpha} N_{i,p(X(C_i))}$. 
Then, nodes $v^y_{i,0}\in\overline{M}$ consistently define $\tau_y:Y\rightarrow\{0,1\}$ that makes any clause $C_i$ true whenever $\tau_x(X(C_i))=0$.

Crucially, we also include clauses without any $X$-literal in the same construct. 
Assume w.l.o.g. that there are less than $\alpha/2$ such $Y$-clauses, within the first indexes in $[\alpha]$. To any $Y$-clause $C=\ell^{y}_{i}\vee\ell^{y'}_{i}\vee\ell^{y''}_{i}$, one associates 
two tetrads $N_{i,j}=\{v^{y}_{i,j}, v^{y'}_{i,j}, v^{y''}_{i,j}, v^{z}_{i,j}\}$, $j\in\{0,1\}$. 
For $C_{i},C_{i'}$ $Y$-clauses, between $N_{i,0}$ and $N_{i',1}$ weights are zero. 
Negative tetrads $N_{i,0}$ are fully $\Lambda$-linked inside, between themselves, with previous tetrads of  one $X$-variable and set $K$.
Positive tetrads $N_{i,1}$ are fully $\Lambda$-linked inside, between themselves and with $v^{1/2}$.
Given a $Y$-clause $C$, we define $X(C)=\emptyset$.
For proper-3-partitions, we extend $p(\emptyset)=0$; 
hence in $\{S^{(p)},S^{1/2},S^{2/2}\}$, for $C_i$ a $Y$-clause, one has $N_{i,0}\subseteq S^{(p)}$ and $N_{i,1}\subseteq S^{1/2}$.
Similarly, in any $Y$-clause tetrad $N_{i,j}$, there are 1-links
$\{\{v^{y}_{i,j}, v^{y'}_{i,j}\},
\{v^{y'}_{i,j}, v^{y''}_{i,j}\},
\{v^{y''}_{i,j}, v^{y}_{i,j}\}\}$, 
(optional 1-links $\{v^{z}_{i,1},v^{1/2}\}$), 
and whenever two $Y$-literals are complementary.
It follows that the same proof holds.
\end{proof}


\section{Related Work}

Partitioning of a set into (non-empty) subsets may also be referred as coalition structure formation of a set of agents into coalitions.
When a number of coalitions $k$ is required and there are synergies between vertices/agents, this problem is referred as $k$-cut, or $k$-way partition, where one minimizes the weight of edges/synergies between the coalitions, or maximizes it inside the coalitions.
For positive weights and $k\geq 3$, this problem is NP-complete \cite{Dahlhaus:1992:CMC:129712.129736}, when one vertex is fixed in each coalition.
For positive weights and fixed $k$, a polynomial-time $O(n^{k^2}T(n,m))$ algorithm exists \cite{10.2307/3690374}, when no vertex is fixed in coalitions, and where $T(n,m)$ is the time to find a minimum $(s,t)$ cut on a graph with $n$ vertices and $m$ edges.
When not too many negative synergies exist (that is, negative edges can be covered by $O(\log(n))$ vertices), an optimal $k$-partition can be computed in polynomial-time \cite{SLESS2018217}.

 \bibliographystyle{ACM-Reference-Format}  
 \bibliography{ourbiblio}  


\begin{thebibliography}{00}


\ifx \showCODEN    \undefined \def \showCODEN     #1{\unskip}     \fi
\ifx \showDOI      \undefined \def \showDOI       #1{#1}\fi
\ifx \showISBNx    \undefined \def \showISBNx     #1{\unskip}     \fi
\ifx \showISBNxiii \undefined \def \showISBNxiii  #1{\unskip}     \fi
\ifx \showISSN     \undefined \def \showISSN      #1{\unskip}     \fi
\ifx \showLCCN     \undefined \def \showLCCN      #1{\unskip}     \fi
\ifx \shownote     \undefined \def \shownote      #1{#1}          \fi
\ifx \showarticletitle \undefined \def \showarticletitle #1{#1}   \fi
\ifx \showURL      \undefined \def \showURL       {\relax}        \fi
\providecommand\bibfield[2]{#2}
\providecommand\bibinfo[2]{#2}
\providecommand\natexlab[1]{#1}
\providecommand\showeprint[2][]{arXiv:#2}

\bibitem[\protect\citeauthoryear{Dahlhaus, Johnson, H.~Papadimitriou,
  D.~Seymour, and Yannakakis}{Dahlhaus et~al\mbox{.}}{1992}]%
        {Dahlhaus:1992:CMC:129712.129736}
\bibfield{author}{\bibinfo{person}{Elias Dahlhaus}, \bibinfo{person}{David
  Johnson}, \bibinfo{person}{Christos H.~Papadimitriou}, \bibinfo{person}{Paul
  D.~Seymour}, {and} \bibinfo{person}{Mihalis Yannakakis}.}
  \bibinfo{year}{1992}\natexlab{}.
\newblock \showarticletitle{The Complexity of Multiway Cuts (Extended
  Abstract)}. In \bibinfo{booktitle}{{\em Proceedings of the 24th Annual ACM
  Symposium on Theory of Computing (STOC-1992)}}. \bibinfo{pages}{241--251}.
\newblock
\showISBNx{0-89791-511-9}


\bibitem[\protect\citeauthoryear{Goldschmidt and Hochbaum}{Goldschmidt and
  Hochbaum}{1994}]%
        {10.2307/3690374}
\bibfield{author}{\bibinfo{person}{Olivier Goldschmidt} {and}
  \bibinfo{person}{Dorit~S. Hochbaum}.} \bibinfo{year}{1994}\natexlab{}.
\newblock \showarticletitle{A Polynomial Algorithm for the k-Cut Problem for
  Fixed k}.
\newblock \bibinfo{journal}{{\em Mathematics of Operations Research\/}}
  \bibinfo{volume}{19}, \bibinfo{number}{1} (\bibinfo{year}{1994}),
  \bibinfo{pages}{24--37}.
\newblock
\showISSN{0364765X, 15265471}
\showURL{%
\url{http://www.jstor.org/stable/3690374}}


\bibitem[\protect\citeauthoryear{Karp}{Karp}{1972}]%
        {karp1972reducibility}
\bibfield{author}{\bibinfo{person}{M. Karp, Richard}.}
  \bibinfo{year}{1972}\natexlab{}.
\newblock \showarticletitle{Reducibility among combinatorial problems}.
\newblock In \bibinfo{booktitle}{{\em Complexity of computer computations}}.
  \bibinfo{publisher}{Springer}, \bibinfo{pages}{85--103}.
\newblock


\bibitem[\protect\citeauthoryear{Ko and Lin}{Ko and Lin}{1995}]%
        {Ko1995}
\bibfield{author}{\bibinfo{person}{Ker-I Ko} {and} \bibinfo{person}{Chih-Long
  Lin}.} \bibinfo{year}{1995}\natexlab{}.
\newblock \bibinfo{booktitle}{{\em On the Complexity of Min-Max Optimization
  Problems and their Approximation}}.
\newblock \bibinfo{publisher}{Springer US}, \bibinfo{address}{Boston, MA},
  \bibinfo{pages}{219--239}.
\newblock
\showISBNx{978-1-4613-3557-3}
\showDOI{%
\url{https://doi.org/10.1007/978-1-4613-3557-3_15}}


\end{thebibliography}

\end{document}